	\newtheorem{assumption}{\textbf{Assumption}}
	\newtheorem{lemma}{\textbf{Lemma}}
	\newtheorem{definition}{\textbf{Definition}}
	\newtheorem{theorem}{\textbf{Theorem}}
	\newtheorem{remark}{\textbf{Remark}}
	\newtheorem{problem}{\textbf{Problem}}
	\newtheorem{solvability}{\textbf{Solvability Condition}}
\newcommand{\T}{^{\mbox{\tiny T}}}
\newcommand{\R}{\mathbb{R}}
\newcommand{\C}{\mathbb{C}}
\newcommand{\N}{\mathbb{N}}
\newcommand{\eps}{\varepsilon}
\let\leq\leqslant
\let\geq\geqslant
\newenvironment{proof}[1][Proof]%
{\par\noindent\textit{#1:\ }}%
{\hspace*{\fill} \rule{6pt}{6pt}}
\newenvironment{proof*}[1][Proof]%
{\par\noindent\textit{#1:\ }}{}
\DeclareMathOperator{\diag}{diag}
\DeclareMathOperator{\re}{Re}
\DeclareMathOperator{\rank}{rank}
\newenvironment{system}[1]%
{\setlength{\arraycolsep}{0.5mm}\left\{ \; \begin{array}{#1}}%
	{\end{array} \right.}
\newenvironment{system*}[1]%
{\setlength{\arraycolsep}{0.5mm} \begin{array}{#1}}%
	{\end{array}}
\begin{document}

	\title{Output Synchronization of Heterogeneous Multi-agent Systems subject to Unknown, Non-uniform and Arbitrarily Large Input Delay: A Scale-free Protocol Design}
	\author{Donya Nojavanzadeh\aref{wsu}, Zhenwei Liu\aref{neu}, Ali Saberi\aref{wsu},
		Anton A. Stoorvogel\aref{ut}}

	\affiliation[wsu]{School of Electrical Engineering and Computer
		Science, Washington State University, Pullman, WA 99164, USA
		\email{donya.nojavanzadeh@wsu.edu; saberi@wsu.edu}}
		\affiliation[neu]{College of Information Science and
		Engineering, Northeastern University, Shenyang 110819, China
		\email{liuzhenwei@ise.neu.edu.cn}}
	\affiliation[ut]{Department of Electrical Engineering,
		Mathematics and Computer Science, University of Twente, Enschede, The Netherlands
		\email{A.A.Stoorvogel@utwente.nl}}

\title{\LARGE \textbf{ Scale-free Protocol Design for Output Synchronization of Heterogeneous Multi-agent subject to Unknown, Non-uniform and Arbitrarily Large Input Delays}}
	
	\maketitle

	\begin{abstract}
		This paper studies output synchronization problems for heterogeneous networks of continuous- or discrete-time right-invertible linear agents in presence of unknown, non-uniform and arbitrarily large input delay based on localized information exchange. It is assumed that all the agents are introspective, meaning that they have access to their own local measurements. Universal linear protocols are proposed for each agent to achieve output synchronizations. Proposed protocols are designed solely based on the agent models using no information about communication graph and the number of agents or other agent models information. Moreover, the protocols can tolerate arbitrarily large input delays.
	\end{abstract}
	
	\section{Introduction}
	
	Synchronization problem of multi-agent systems (MAS) has become a hot topic among researchers in recent years. Cooperative control of MAS is used in practical application such as robot network, autonomous vehicles, distributed sensor network, and others. The objective of synchronization is to secure an asymptotic agreement on a common state or output trajectory by local interaction among agents (see \cite{bai-arcak-wen,mesbahi-egerstedt,ren-book,wu-book} and references therein).
	
	Most of the attention has been focused on state synchronization of MAS, where each agent has access to a linear combination of its own state relative to that of the neighboring agents, called full-state coupling \cite{saber-murray3,saber-murray,saber-murray2,ren-atkins,ren-beard-atkins,tuna1}. A more realistic scenario which is partial-state coupling (i.e. agents share part of their information over the network) is studied in \cite{tuna2,li-duan-chen-huang,pogromsky-santoboni-nijmeijer,tuna3}.
	
	We identify two classes of MAS: homogeneous and heterogeneous. The agents dynamics can be different in heterogeneous networks. For heterogeneous network it is more reasonable to consider output synchronization since the dimensions of states and their physical interpretation may be different. Meanwhile a common assumption, especially for heterogeneous MAS is that agents are introspective; that is, agents possess some knowledge about their own states. There exist many results about this type of agents, for instance, homogenization based synchronization via local feedback \cite{kim-shim-seo,yang-saberi-stoorvogel-grip-journal},  $H_\infty$ design \cite{li-soh-xie-lewis-TAC2019}, HJB based optimal synchronization \cite{modares-lewis-kang-davoudi-TAC2018}, adaptation based event trigger regulated synchronization \cite{qian-liu-feng-TAC2019}, and a feed forward design for nonlinear agents \cite{chen-auto2019}.  Recently, scale-free collaborative protocol designs are developed for continuous-time heterogeneous MAS \cite{donya-liu-saberi-stoorvogel-ACC2020} and for homogeneous continues-time MAS subject to actuator saturation \cite{liu-saberi-stoorvogel-nojavanzadeh-cdc19}. 
	
	On the other hand, for non-introspective agents, regulated output synchronization for a heterogeneous network is studied in
	\cite{peymani-grip-saberi,peymani-grip-saberi-fossen}.  Other designs can also be found, such as an
	internal model principle based design
	\cite{wieland-sepulchre-allgower}, distributed high-gain observer
	based design \cite{grip-yang-saberi-stoorvogel-automatica}, 
	low-and-high gain based, purely distributed, linear time invariant protocol design \cite{grip-saberi-stoorvogel}.
	
	In practical applications, the network dynamics are not perfect and may be subject to delays. Time delays may afflict systems performance or even lead to instability. As it has been discussed in \cite{cao-yu-ren-chen}, two kinds of delay have been considered in the literature: input delay and communication delay. Input delay is the processing time to execute an input for each agent whereas communication delay can be considered as the time for transmitting information from origin agent to its destination.
	Some research have been done in the case of communication delay \cite{tian-liu,xiao-wang-tac,munz-papachristodoulou-allgower,munz-papachristodoulou-allgower2,lin-jia-auto,chopra-tac,chopra-spong-cdc06,stoorvogel-saberi-cdc16,ghabcheloo,klotz-obuz-kan}.  Regarding input delay, many efforts have been done (see \cite{ferrari-trecate,lin-jia-auto,lin-jia,tian-liu,xiao-wang,saber-murray2}) where they are mostly restricted to simple agent models such as first and second-order dynamics for both linear and nonlinear agents dynamics. Authors of \cite{wang-saberi-stoorvogel-grip-yang,wang-aut} studied state synchronization problems in the presence of unknown, uniform constant input delay for both continuous- and discrete-time networks with higher-order linear agents. \cite{zhang-saberi-stoorvogel-continues-discrete} has studied synchronization in homogeneous networks of both continuous- and discrete-time agents with unknown, non-uniform, and constant input delays.
	
	In this paper, we deal with output synchronization problem for heterogeneous MAS with continuous- or discrete-time introspective right-invertible agents in presence of unknown, non-uniform and arbitrarily large input delays. 
	Scale-free protocols are designed based on localized information exchange which do not require any knowledge of the communication network except connectivity. In particular, output synchronization of heterogeneous networks are achieved for an arbitrary number of agents. The protocol design is scale-free, namely,
\begin{itemize}
	\item  The design is independent of information about communication networks such as spectrum of the associated Laplacian matrix. That is to say, the universal dynamical protocols work for any communication network as long as it is connected.
	\item  The dynamic protocols are designed solely based on agent models and do not depend on communication network and the number of agents.
	\item The proposed protocols archive output synchronization for heterogeneous continuous- or discrete-time MAS with any number of agents, any unknown, non-uniform, input delays, and any communication network.
\end{itemize}

	\subsection*{Notations and definitions}
Given a matrix $A\in \mathbb{R}^{m\times n}$, $A\T$ denotes its
conjugate transpose and $\|A\|$ is the induced 2-norm while $\sigma_{\min}(A)$ denotes the smallest singular value of A. Let $j$ indicate $\sqrt{-1}$. A square matrix
$A$ is said to be Hurwitz stable if all its eigenvalues are in the open left half plane and is Schur stable if all its eigenvalues are inside the open unit disc. We denote by
$\diag\{A_1,\ldots, A_N \}$, a block-diagonal matrix with
$A_1,\ldots,A_N$ as the diagonal elements. $A\otimes B$ depicts the
Kronecker product between $A$ and $B$. $I_n$ denotes the
$n$-dimensional identity matrix and $0_n$ denotes $n\times n$ zero
matrix; sometimes we drop the subscript if the dimension is clear from
the context.
% Let $\mathfrak{C}_\tau^n=\mathit{C}([-\tau,0],\mathbb{R}^n)$ denote the Banach space of all continues functions from $[-\tau,0]\to \mathbb{R}^n$ with norm $
%\|x\|_{\mathit{C}}=\sup_{t\in[-\tau,0]}\|x(t)\|$, where $\ell_\infty^n(K)$ denotes the Banach space of finite sequences $\{y_1,\hdots,y_K\}\subset \mathbb{C}^n$ with norm $\|.\|_\infty=\max_{i}\{\|y_i\|\}$. 
Notation $\overline{[t_1, t_2]}$ means $\overline{[t_1, t_2]} = \{ t \in \mathbb{Z} : t_1 \leq t \leq t_2\}.$

To describe the information flow among the agents we associate a \emph{weighted graph} $\mathcal{G}$ to the communication network. The weighted graph $\mathcal{G}$ is defined by a triple
$(\mathcal{V}, \mathcal{E}, \mathcal{A})$ where
$\mathcal{V}=\{1,\ldots, N\}$ is a node set, $\mathcal{E}$ is a set of
pairs of nodes indicating connections among nodes, and
$\mathcal{A}=[a_{ij}]\in \mathbb{R}^{N\times N}$ is the weighted adjacency matrix with non negative elements $a_{ij}$. Each pair in $\mathcal{E}$ is called an \emph{edge}, where
$a_{ij}>0$ denotes an edge $(j,i)\in \mathcal{E}$ from node $j$ to
node $i$ with weight $a_{ij}$. Moreover, $a_{ij}=0$ if there is no
edge from node $j$ to node $i$. We assume there are no self-loops,
i.e.\ we have $a_{ii}=0$. A \emph{path} from node $i_1$ to $i_k$ is a
sequence of nodes $\{i_1,\ldots, i_k\}$ such that
$(i_j, i_{j+1})\in \mathcal{E}$ for $j=1,\ldots, k-1$. A \emph{directed tree} is a sub-graph (subset
of nodes and edges) in which every node has exactly one parent node except for one node, called the \emph{root}, which has no parent node. The \emph{root set} is the set of root nodes. A \emph{directed spanning tree} is a sub-graph which is
a directed tree containing all the nodes of the original graph. If a directed spanning tree exists, the root has a directed path to every other node in the tree.  

For a weighted graph $\mathcal{G}$, the matrix
$L=[\ell_{ij}]$ with
\[
\ell_{ij}=
\begin{system}{cl}
\sum_{k=1}^{N} a_{ik}, & i=j,\\
-a_{ij}, & i\neq j,
\end{system}
\]
is called the \emph{Laplacian matrix} associated with the graph
$\mathcal{G}$. The Laplacian matrix $L$ has all its eigenvalues in the
closed right half plane and at least one eigenvalue at zero associated
with right eigenvector $\textbf{1}$ \cite{royle-godsil}. Moreover, if the graph contains a directed spanning tree, the Laplacian matrix $L$ has a single eigenvalue at the origin and all other eigenvalues are located in the open right-half complex plane \cite{ren-book}.

%We recall following definitions.
%
%\begin{definition}
%	A linear time-invariant system $(A,B,C,D)$ is right-invertible if, given a smooth reference output $y_r$, there exists an initial condition $x(0)$ and an input $u$ that ensures $y(t)=y_r(t)$ for all $t\ge 0$. For single-input-single-output system, a system is right-invertible if and only if its transfer function is nonzero.
%\end{definition}
%
%\begin{definition}
%	The invariant zeros of a linear system $(A,B,C,D)$ are those points $\lambda \in \mathbb{C}$ for which
%	\[
%	\rank{\begin{pmatrix}
%		\lambda I -A&-B\\C&D
%		\end{pmatrix}}<\normrank{\begin{pmatrix}
%		sI-A&-B\\C&D 
%		\end{pmatrix}}
%	\]
%	where by \emph{normrank} we mean the rank of a matrix with entries in the field of rational functions.
%\end{definition}

\section{Problem Formulation}

Consider a MAS consisting of $N$ non-identical linear agents:
\begin{equation}\label{hete_sys}
\begin{cases}
s{x}_i(t)=A_ix_i(t)+B_iu_i(t),\\
y_i(t)=C_ix_i(t),\\
%x_i(\psi)=\phi_i(\psi+\bar{\kappa}), \quad \psi\in\overline{[-\bar{\kappa},0]}
\end{cases}
\end{equation}
where $x_i\in\mathbb{R}^{n_i}$, $u_i\in\mathbb{R}^{m_i}$ and $y_i\in\mathbb{R}^p$ are the state, input, output of agent $i$ for $i=1,\ldots, N$. In the aforementioned presentation, for continuous-time systems, $s$ denotes the time derivative, i.e., $s x_i(t) = \dot{x}_i(t)$ for $t \in \mathbb{R}$; while for discrete-time systems, $s$ denotes the time shift, i.e., $s x_i(t) = x_i(t + 1)$ for $t \in \mathbb{Z}$.

%Moreover, $\kappa_i$ represent the input delays with $\kappa_i\in\overline{[0,\bar{\kappa}]}$, where $\bar{\kappa} = \max_{i} \{\kappa_i\}$, $\phi_i \in \ell_\infty^n(\bar{\kappa} )$ and the notation $\overline{[k_1, k_2]}$ means
%\[\overline{[k_1, k_2]} = \{ k \in \mathbb{Z} : k_1 \leq k \leq k_2\}.\]

The agents are introspective, meaning that each agent collects a local measurement $z_i\in \mathbb{R}^{q_i}$ of its internal dynamics. In other words, each agent has access to the quantity
\begin{equation}\label{local}
	z_i(t)=C_i^mx_i(t).
\end{equation}
where $z_i(t)\in \mathbb{R}^{q_i}$.

 We define the set of graphs $\mathbb{G}^N$ for the network communication topology as following.

\begin{definition}\label{def1}
	Let $\mathbb{G}^N$ denote the set of directed graphs of $N$ agents which contain a directed spanning tree.	
\end{definition}

The communication network provides agent $i$ with the following information which is a linear combination of its own output relative to that of other agents:

\begin{equation}\label{zeta}
\zeta_i(t)=\sum_{j=1}^{N}a_{ij}(y_i(t)-y_j(t)) 
\end{equation}

where $a_{ij}>0$ and $a_{ii}=0$, and $\mathcal{A}=[a_{ij}]$ is the weighting matrix associated to the network graph
$\mathcal{G}$. We refer to this case as partial-state coupling since only part of the states is exchanged among the agents. When $C=I$, it means all states are communicated over the network and we call it full-state coupling.  Basically, output synchronization is considered for heterogeneous MAS, therefore, we focus on partial-state coupling.

In this paper, we also introduce a localized
information exchange among protocols. In particular, each agent 
$i=1,\ldots, N$ has access to the localized information, denoted by
$\hat{\zeta}_i(t)$, of the form
\begin{equation}\label{zetahat}
\hat{\zeta}_i(t)=\sum_{j=1}^{N}a_{ij}(\eta_i(t)-\eta_j(t))
\end{equation}
where $\eta_j(t)\in\mathbb{R}^n$ is a variable produced internally by agent $j$ and to be defined in next sections.

In the case of networks with discrete-time agents, for any graph $\mathcal{G}\in \mathbb{G}^N$, with the associated Laplacian matrix $L$, we define
\begin{equation}\label{hodt-LDa}
{D}=I-(I+D_{\text{in}})^{-1}{L}
\end{equation}
where
\begin{equation}\label{d-in}
D_{\text{in}}=\diag\{d_{in}(i) \}
\end{equation}
with $d_{in}(i)=\sum_{j=1}^N a_{ij}$. The weight matrix $D=[d_{ij}]$ is a so-called, row stochastic matrix, where $d_{ij}\geq 0$, and we choose
$d_{ii}=1-\sum_{j=1,j\neq i}^Nd_{ij}$ such that $\sum_{j=1}^Nd_{ij}=1$ for $i,j\in\{1,\ldots,N\}$. Note that $d_{ii}$ satisfies $d_{ii}>0$.
% It is easily verified that the
%matrix ${D}$ is a matrix with all elements non-negative and the sum of each row is less than or equal to $1$. 
%Note that based on \cite[Lemma 1]{liu2018regulated}, matrix ${D}$ has all eigenvalues in the open unit disc if and only if $\mathscr{G}\in \mathbb{G}^N$. 

Therefore, for discrete-time networks we can obtain the information exchange as
\begin{equation}\label{zeta-d}
{\zeta}_i(t)=\sum_{j=1}^Nd_{ij}(y_i(t)-y_j(t))
\end{equation}
and we can rewrite $\hat{\zeta}_i(t)$ as
\begin{equation}\label{zetahat-d}
\hat{\zeta}_i(t)=\sum_{j=1}^Nd_{ij}(\eta_i(t)-\eta_j(t))
\end{equation}

%\begin{remark}
%	 \textcolor{red}{Without loss of generality, we assume that the triple $(C, A, B)$ has the following form:}
%	\begin{equation}
%	\begin{system*}{l}
%	A=A_0+BH, \quad A_0:=\begin{pmatrix}
%	0&I_{p(n_q-1)}\\0&0
%	\end{pmatrix}\\
%    B=B_0:=\begin{pmatrix}
%	0\\I_p
%	\end{pmatrix}, \quad C=C_0:=\begin{pmatrix}
%	I_p&0
%	\end{pmatrix},
%	\end{system*}
%	\end{equation}
%	where $H$ is such that the matrix $A_0+B_0H$ has desired eigenvalues. The existence of such an $H$ is guaranteed by the fact that $(A_0, B_0)$ is controllable.
%	\end{remark}

The heterogeneous MAS is said to achieve output synchronization if 
\begin{equation}\label{synch_out}
\lim\limits_{t\to\infty}(y_i(t)-y_j(t))=0, \quad\text{for $i,j \in \{1, \dots ,N\}$}.
\end{equation}

In this paper, we introduce a protocol architecture as shown below in Figure \ref{Network}.
\begin{figure}[h]
	\includegraphics[width=8.3cm, height=4.5cm]{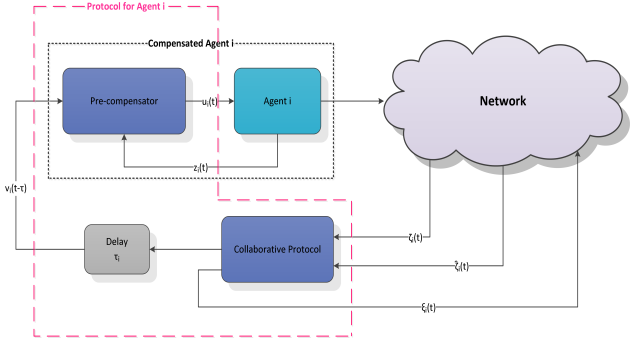}
	\centering
	\caption{Architecture of the protocol}\label{Network}
\end{figure}

As seen in the above figure our protocol has two main modules.
\begin{enumerate}
	\item precompensator
	\item collaborative protocol
\end{enumerate}
we also assume that agents are experiencing input delay $\tau_i$ represented by the delay block in the figure. Note that as discussed in \cite{cao-yu-ren-chen}  input delay can be considered as the summation of computation time and execution time. We formulate output synchronization problem for continuous-or discrete-time heterogeneous networks as follows.

\begin{problem}\label{prob_sync}
	Consider a heterogeneous network of $N$ agents \eqref{hete_sys} with local information \eqref{local} and a given $\bar{\tau}$, where $\tau_i<\bar{\tau}$ for $i=1,\hdots, N$. Let the associated network communication be given by \eqref{zeta} and \eqref{zetahat} for continuous-time and by \eqref{zeta-d} and \eqref{zetahat-d} for discrete-time networks. Let $\mathbb{G}^N$ be the set of network graphs as defined in Definition \ref{def1}. 
	
	The \textbf{scalable output synchronization problem based on localized information exchange in presence of input delay} is to find, if possible, a linear dynamic controller for each agent $i \in\{1, \dots, N\}$, as illustrated in Figure \ref{Network}, using only knowledge of agent models, i.e. $(C_i,A_i,B_i)$, and upper bound on delays $\bar{\tau}$.

%	\begin{equation}\label{out_dyn}
%	\begin{system*}{ll}
%	\rho{x}_{i,c}(t)&=A_{i,c}x_{i,c}(t)+B_{i,c}x_{i,c}(t-\tau_i)+C_{i,c}\zeta_i(t)\\
%	&\quad+D_{i,c}\hat{\zeta}_i(t)+E_{i,c}z_i(t),\\
%	u_i(t)&=F_{i,c}x_{i,c}(t)+G_{i,c}\zeta_i(t),
%	\end{system*}
%	\end{equation}

 %with $\eta_i=M_{i,c}x_{i,c}$, and $x_{i,c}\in\mathbb{R}^{n_i}$, 
Then, output synchronization \eqref{synch_out} is achieved for any $N$ and any graph $\mathcal{G}\in\mathbb{G}^N$.
\end{problem}

\begin{remark}
Note that in our problem formulation it is embedded that our linear dynamic protocols are designed only based on agent models $(C_i, A_i,B_i)$ and given  upper bound on input delays $\bar{\tau}$. Moreover, this universal protocol is scale-free, meaning that it works for any network with any number of agents, as long as the associated communication graph contains a spanning tree.
\end{remark}

We make the following assumption for the agents.
\begin{assumption}\label{ass2}
	For agents $i \in \{1,\dots,N\}$, 
	\begin{enumerate}
		\item $(A_i,B_i)$ is stabilizable.
		\item $(C_i, A_i)$ is detectable.
		\item $(C_i,A_i,B_i)$ is right-invertible
		\item $(C_i^m,A_i)$ is detectable. 
	\end{enumerate}
\end{assumption}
%\begin{remark}
%	Right-invertibility of a triple $(C_i,A_i,B_i)$ means that given a reference output $y_r(t)$, there exists an initial condition $x_i(0)$ and an input $u_i(t)$ such that $y_i(t)=y_r(t)$. For example, every single-input single-output system is right-invertible, unless its transfer function is identically zero.
%\end{remark}
Then we have the following solvability condition for scalable output synchronization of heterogeneous continuous- or discrete-time MAS in presence of input delay.

\begin{solvability}\label{thm_out_syn}
	Consider a heterogeneous network of $N$ agents \eqref{hete_sys} with local information \eqref{local} satisfying Assumption \ref{ass2} and a given $\bar{\tau}$. Let the associated network communication be given by \eqref{zeta} and \eqref{zeta-d} for continuous- and discrete-time MAS respectively. Let $\mathbb{G}^N$ be the set of network graphs as defined in Definition \ref{def1}. 
	
	Then, the scalable output synchronization problem based on localized information exchange in presence of input delay as defined in Problem \ref{prob_sync} is solvable.
\end{solvability}
\begin{proof}
	Proof will be given in next sections.
	\end{proof}

\section{Protocol Design}\label{OS}
In this section, we design dynamic protocols to solve scalable output synchronization problem for heterogeneous networks of continuoue- or discrete-time agents. The architecture of the proposed protocol is illustrated in Figure \ref{Network}.

\subsection*{Architecture of the protocol}
Our design methodology consists of two modules as shown in Figure \ref{Network}.
\begin{enumerate}
	\item The first module is to reshape the dynamics of the agents to obtain the target model by designing precompensators following our previous results stated in \cite{yang-saberi-stoorvogel-grip-journal,wang-saberi-yang}.
	\item The second module is designing collaborate protocols for almost homogenized agents to achieve output synchronization in presence of input delay. 
\end{enumerate}

\subsection{Designing pre-compensators}
In this section, the goal of the design is to reshape the agent models and obtain suitable target model i.e. $(C, A, B)$ such that following conditions are satisfied.
\begin{enumerate}
	\item $\rank(C)=p$.
	\item $(C, A, B)$ is invertible of uniform rank $n_q\ge\bar{n}_d$, and has no invariant zeros.
	\item eigenvalues of $A$ are in closed left half plane (closed unit disc for discrete-time systems).
	\item eigenvalues of $A$ satisfy the additional property 
		\begin{equation}
		\bar{\tau}\omega_{\max}<\frac{\pi}{2},
		\end{equation}
		where $\omega_{\max}$ is defined as
	\begin{equation}\label{omega-max}
	\omega_{\max}=
	\begin{cases}
	0, \qquad \text{A is Hurwitz stable},&\\
	\max\{\omega\in\R | \det(j\omega I-A)=0\}, &\text{otherwise}
	\end{cases}
	\end{equation}
	for continuous-time systems, and
	\begin{equation}\label{omega-max-d}
	\omega_{\max}=
	\begin{cases}
	0, \qquad\text{A is Schur stable},& \\
	\max\{\omega\in[0,\pi] | \det(e^{j\omega} I-A)=0\}, &\text{otherwise}
	\end{cases}
	\end{equation}
	for discrete-time systems.		
\end{enumerate}

\begin{remark}
	We would like to make several observations:
	\begin{enumerate}
		\item The property that the triple $(C, A, B)$ is invertible and has no invariant zero implies that $(A, B)$ is controllable and $(C, A)$ is observable.
		\item The triple $(C, A, B)$ is arbitrarily assignable as long as the conditions are satisfied. In this paper, $A$ is chosen such that its eigenvalues are in closed left half plane (in closed unit disc for discrete-time systems). For example, one can choose $A$ such that $A\T+A=0$ for continuous-time systems and $A\T A=I$ for discrete-time systems.
	\end{enumerate}
\end{remark}

Next, given chosen $(C,A,B)$, by utilizing the design methodologies from \cite[Appendix B]{yang-saberi-stoorvogel-grip-journal}  for continuous- time and \cite[Appendix A.1]{wang-saberi-yang} for discrete-time systems, we design a pre-compensator for each agent $i \in \{1, \dots, N\}$, of the form
	\begin{equation}\label{pre_con}
	\begin{system}{cl}
	s {\xi}_i(t)&=A_{i,h}\xi_i(t)+B_{i,h}z_i(t)+E_{i,h}v_i(t),\\
	u_i(t)&=C_{i,h}\xi_i(t)+D_{i,h}v_i(t),
	\end{system}
	\end{equation} 
	which yields the compensated agents as
	\begin{equation}\label{sys_homo}
	\begin{system*}{cl}
	s {\bar{x}}_i(t)&=A\bar{x}_i(t)+B(v_i(t)+\psi_i(t)),\\
	{y}_i(t)&=C\bar{x}_i(t),
	\end{system*}
	\end{equation} 
	where $\psi_i(t)$ is given by 
	\begin{equation}\label{sys-rho}
	\begin{system*}{cl}
	s {\omega}_i(t)&=A_{i,s}\omega_i(t),\\
	\psi_i(t)&=C_{i,s}\omega_i(t),
	\end{system*}
	\end{equation}
	and $A_{i,s}$ is Hurwitz stable (Schur stable for discrete-time systems). Figure \ref{Network2} shows the compensated agents as a component of Figure \ref{Network}.
	
			\begin{figure}[h]
		\includegraphics[width=6cm, height=2.5cm]{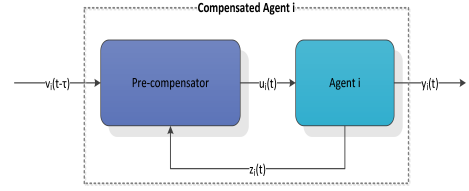}
		\centering
		\caption{Compensated agent}\label{Network2}
	\end{figure}

 Note that the compensated agents are homogenized and have the target models $(C, A, B)$. 
 
 We consider our compensated agents in presence of input delay as
\begin{equation}\label{sys_homo-d-con}
\begin{system*}{cl}
s{\bar{x}}_i(t)&=A\bar{x}_i(t)+B(v_i(t-\tau_i)+\psi_i(t)),\\
{y}_i(t)&=C\bar{x}_i(t),
\end{system*}
\end{equation} 
where $\psi_i(t)$ is given by \eqref{sys-rho}.

\subsection{Designing collaborative protocols for compensated agents} 
In this subsection, to achieve output synchronization, we design collaborative protocols for almost homogenized continuous- or discrete-time  agents in presence of unknown, non-uniform and arbitrarily large input delays.

\subsubsection{Continuous-time MAS}\label{Continuous}
  We design a dynamic protocol based on localized information exchange as
\begin{equation}\label{pscp-con}
\begin{system}{cll}
\dot{\hat{x}}_i(t) &=& A\hat{x}_i(t)+B\hat{\zeta}_{i2}(t)+F(\zeta_i(t)-C\hat{x}_i(t))\\
\dot{\chi}_i(t) &=& A\chi_i(t)+Bv_i(t-\tau_i)+\hat{x}_i(t)-\hat{\zeta}_{i1}(t)\\
v_i(t) &=& -\rho B\T P_{\eps}\chi_i(t),
\end{system}
\end{equation}
for $i=1,\ldots,N$ where $F$ is a pre-design matrix such that $A-FC$ is Hurwitz stable and $\rho>0$. $\eps$ is a parameter satisfying $\eps\in (0,1]$, $P_{\eps}$ satisfies
\begin{equation}\label{arespecial-con}
A\T P_{\eps} + P_{\eps} A -  P_{\eps} BB\T
P_{\eps} + \eps I = 0 
\end{equation}
note that for any $\eps>0$, there exists a unique solution of \eqref{arespecial-con}.
\begin{remark}
	\eqref{arespecial-con} is a special case of the general low-gain $H_2$ algebraic Riccati equation ($H_2$-ARE), which is written as follows:
	\begin{equation}\label{aregeneral-con}
A\T P_{\eps} + P_{\eps} A -  P_{\eps} BR_\eps^{-1} B\T
P_{\eps} + Q_\eps = 0 
	\end{equation}
	where $R_{\eps} > 0$, and $Q_{\eps} > 0$ is such that $Q_{\eps} \rightarrow 0$ as $\eps \rightarrow 0.$ In our case, we restrict our attention to $Q_{\eps} = \eps I$ and $R_{\eps} = I.$ However, as shown in \cite{saberi-stoorvogel-sannuti-exter}, when $A$ is neutrally stable, there exists a suitable (nontrivial) choice of $Q_{\eps}$ and $R_{\eps}$ which yields an explicit solution of \eqref{aregeneral-con}, of the form
	\begin{equation}\label{neutral}
	P_{\eps} = \eps P
	\end{equation}
	where $P$ is a positive definite matrix that satisfies $A^T P +PA \leq 0.$
\end{remark}

The agents communicate $\eta_i=(\eta_{i1}\T,\eta_{i2}\T)\T$ where $\eta_{i1}(t)=\chi_i(t)$ and $\eta_{i2}(t)=v_i(t-\tau_i)$, therefore each agent has access to the localized information $\hat{\zeta}_i=(\hat{\zeta}_{i1}\T,\hat{\zeta}_{i2}\T)\T$:
\begin{equation}\label{add_1-con}
\hat{\zeta}_{i1}(t)=\sum_{j=1}^Na_{ij}(\chi_i(t)-\chi_j(t)),
\end{equation}
and
\begin{equation}\label{add_2-con}
\hat{\zeta}_{i2}(t)=\sum_{j=1}^{N}a_{ij}(v_i(t-\tau_i)-v_j(t-\tau_j)).
\end{equation}
while $\zeta_i(t)$ is defined by \eqref{zeta}. Finally, we combine the designed protocol for homogenized network with pre-compensators and present our protocols as:
\begin{equation}\label{pscp-con-final}
\begin{system}{cl}
\dot{\xi}_i(t)&=A_{i,h}\xi_i(t)+B_{i,h}z_i(t)-\rho E_{i,h}B\T P_\eps\chi_i(t),\\
\dot{\hat{x}}_i(t) &= A\hat{x}_i(t)+B\hat{\zeta}_{i2}(t)+F(\zeta_i(t)-C\hat{x}_i(t))\\
\dot{\chi}_i(t) &= A\chi_i(t)-\rho BB\T P_\eps \chi_i(t-\tau_i)+\hat{x}_i(t)-\hat{\zeta}_{i1}(t)\\
u_i(t)&=C_{i,h}\xi_i-\rho D_{i,h}B\T P_\eps \chi_i,
\end{system}
\end{equation}
%where $K$ is design matrix chosen in step $2$, while $\zeta_i$, $\hat{\zeta}_{i1}$ and $\hat{\zeta}_{i2}$ are defined as \eqref{zeta},  \eqref{add_1-con} and \eqref{add_2-con}, respectively. 

Then, we have the following theorem for \emph{scalable} output synchronization of heterogeneous continuous-time MAS in presence of input saturation.

\begin{theorem}\label{thm-con}
	Consider a heterogeneous network of $N$ agents \eqref{hete_sys} with local information \eqref{local} satisfying Assumption \ref{ass2} and a given $\bar{\tau}$. Let the associated network communication be given by \eqref{zeta}. Let $\mathbb{G}^N$ be the set of network graphs as defined in Definition \ref{def1}.
	
	Then the scalable output synchronization problem as stated in Problem \ref{prob_sync} is solvable.
	In particular, there exist a $\rho^*>0.5$ and for any fixed $\rho>\rho^*$, there exists an $\eps^*$ such that for any $\eps \in (0,\eps^*]$, dynamic protocol given by \eqref{pscp-con-final} and \eqref{pscp-con} solves the scalable output synchronization problem for any $N$ and any graph
	$\mathcal{G}\in\mathbb{G}^N$.
\end{theorem}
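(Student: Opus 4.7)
The plan is to decouple the closed loop into a chain of three subsystems---observer error, an auxiliary variable $\mu=\chi-\bar x$, and the predictor state $\chi$---and to defer the delay-robust low-gain analysis until the last stage, where the $\chi$-equation is already uncoupled across agents. As a preparatory step I would absorb the pre-compensator \eqref{pre_con}: by \eqref{sys_homo-d-con}, every agent is driven by the common target $(C,A,B)$ perturbed only by $\psi_i(t)=C_{i,s}\omega_i(t)$, which decays exponentially since $A_{i,s}$ is Hurwitz. Next I would introduce the observer error $e_i(t)=\hat x_i(t)-\sum_j a_{ij}(\bar x_i(t)-\bar x_j(t))$; using $\zeta_i=C\sum_j a_{ij}(\bar x_i-\bar x_j)$ and cancelling the matching terms, the stacked error obeys
\[
\dot e=\bigl(I_N\otimes(A-FC)\bigr)e-(L\otimes B)\psi,
\]
which is exponentially stable and driven by the exponentially decaying $\psi$, so $e\to 0$ exponentially, uniformly in $N$ and $\mathcal G\in\mathbb G^N$.

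Second, the key change of variable is $\mu_i=\chi_i-\bar x_i$. The crucial observation is that $Bv_i(t-\tau_i)$ appears identically in $\dot\chi_i$ and $\dot{\bar x}_i$, so it cancels in $\dot\mu_i$, while $\hat x_i-\hat\zeta_{i1}=(L\bar x)_i+e_i-(L\chi)_i=-(L\mu)_i+e_i$. This yields the autonomous (modulo decaying inputs) equation
\[
\dot\mu=\bigl(I_N\otimes A-L\otimes I_n\bigr)\mu+e-(I_N\otimes B)\psi.
\]
Because $\mathcal G\in\mathbb G^N$, $L$ has a simple eigenvalue at zero with right eigenvector $\mathbf 1$ and all remaining eigenvalues in $\{\re>0\}$. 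Splitting $\mu$ into its component along $\mathbf 1$ and the complement $\mu_\perp$, the system matrix restricted to the latter has spectrum $\{\lambda_A-\lambda_L:\lambda_L\neq 0\}\subset\{\re<0\}$, since $\re\lambda_A\leq 0$ by construction of the target model and $\re\lambda_L>0$ by the spanning-tree assumption; hence $\mu_\perp\to 0$ exponentially. The common component of $\mu$ may persist, but---crucially---$L\mu=L\mu_\perp\to 0$ exponentially as well.

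Third, substituting $\hat x_i=(L\bar x)_i+e_i$ and $\hat\zeta_{i1}=(L\chi)_i$ back into \eqref{pscp-con} turns the coupled protocol into $N$ \emph{uncoupled} delay differential equations
\[
\dot\chi_i(t)=A\chi_i(t)-\rho BB\T P_\eps\chi_i(t-\tau_i)+w_i(t),\quad w_i(t):=-(L\mu(t))_i+e_i(t),
\]
with $w_i\to 0$ exponentially by the previous two steps. The main obstacle---and the central technical lemma I would isolate and prove separately---is the following low-gain-with-delay statement: for every $\rho>\tfrac12$ there exists $\eps^\star>0$ such that for all $\eps\in(0,\eps^\star]$ and every constant $\tau_i\in[0,\bar\tau)$ satisfying $\bar\tau\,\omega_{\max}<\pi/2$, the unforced equation $\dot\chi_i=A\chi_i-\rho BB\T P_\eps\chi_i(t-\tau_i)$ is exponentially stable. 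I would attack this via a Lyapunov--Krasovskii functional anchored on the ARE \eqref{arespecial-con}---or, exploiting the neutrally-stable choice in \eqref{neutral} for which $P_\eps=\eps P$---noting that $\bar\tau\,\omega_{\max}<\pi/2$ is precisely the frequency-domain condition preventing the critical imaginary-axis roots of $A$ from migrating into $\{\re>0\}$ as the low-gain loop is closed through the delay. Granting this lemma, the exponentially decaying forcing $w_i$ drives $\chi_i\to 0$ for every $i$, and then the synchronization conclusion is immediate: $\bar x_i-\bar x_j=(\chi_i-\chi_j)-(\mu_i-\mu_j)$ tends to zero because each $\chi_i\to 0$ and $\mu_\perp\to 0$, so $y_i-y_j=C(\bar x_i-\bar x_j)\to 0$ independently of $N$ and the particular graph in $\mathbb G^N$.
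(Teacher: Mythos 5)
Your overall architecture is, up to a change of coordinates, the same cascade the paper uses: the paper forms differences with respect to agent $N$ and works with the reduced matrix $\bar L$, defining $e=\tilde x-\chi$ and $\bar e=(\bar L\otimes I)\tilde x-\hat x$, while you stay in absolute coordinates, use the observer error $\hat x_i-(L\bar x)_i$ and the variable $\mu_i=\chi_i-\bar x_i$, and exploit the simple zero eigenvalue of $L$ so that only $L\mu$ (not $\mu$ itself) must decay. Both routes produce the same triangular structure, handled by $A-FC$ Hurwitz and by the spectrum $\{\lambda_A-\lambda_L:\re\lambda_L>0\}\subset\{\re<0\}$, and both reduce the theorem to the per-agent delayed equation $\dot\chi_i(t)=A\chi_i(t)-\rho BB\T P_\eps\chi_i(t-\tau_i)$ driven by an exponentially decaying input. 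Up to that point your computations (the cancellation of $B\hat\zeta_{i2}$ in the observer error, of $Bv_i(t-\tau_i)$ in $\dot\mu_i$, and the conclusion $y_i-y_j\to0$ from $\chi_i\to0$ and $L\mu\to0$) are correct.

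The gap is the step you isolate as the ``central technical lemma'' and then leave unproved, and moreover you state it with the wrong threshold. You claim that for \emph{every} $\rho>\tfrac12$ there is an $\eps^\star$ making $\dot\chi_i=A\chi_i-\rho BB\T P_\eps\chi_i(t-\tau_i)$ exponentially stable whenever $\bar\tau\omega_{\max}<\pi/2$. The available argument does not give this. Via Lemma \ref{hode-lemma-ddelay-con}, stability is checked through $\det[j\omega I-A+\rho e^{-j\omega\tau}BB\T P_\eps]\neq0$, i.e.\ through the matrix $A-\kappa BB\T P_\eps$ with the \emph{complex} effective gain $\kappa=\rho e^{-j\omega\tau}$; the Riccati-based gain-margin result (Lemma \ref{low-gain}) covers only $\re\kappa=\rho\cos(\omega\tau)\geq\tfrac12$ at the critical frequencies near $\omega_{\max}$ (high frequencies are handled by shrinking $\eps$, hence $P_\eps$). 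This is precisely why the paper, citing \cite{consensus-identical-delay-c}, takes $\rho>\rho^*=\frac{1}{2\cos(\bar\tau\omega_{\max})}$, a threshold that blows up as $\bar\tau\omega_{\max}\to\pi/2$ and is in general much larger than $\tfrac12$. With $\rho$ merely above $\tfrac12$ the effective gain at the marginal eigenfrequencies leaves the guaranteed stability region, so your lemma is unsupported as stated, and a generic Lyapunov--Krasovskii sketch ``anchored on \eqref{arespecial-con}'' cannot be presumed to rescue it. The repair is easy and costs you nothing, since the theorem only asserts existence of \emph{some} $\rho^*>0.5$: restate your lemma with the delay-dependent threshold $\rho>\frac{1}{2\cos(\bar\tau\omega_{\max})}$ and prove it by combining Lemma \ref{hode-lemma-ddelay-con} with Lemma \ref{low-gain} (plus a small-$\eps$ bound for $|\omega|$ large), or simply invoke \cite{consensus-identical-delay-c} as the paper does; with that substitution your proof goes through.
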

\begin{proof}[Proof of Theorem \ref{thm-con}] Let $\bar{x}_i^o=\bar{x}_i-\bar{x}_N,\
	y_i^o=y_i-y_N,\
	\hat{x}_i^o=\hat{x}_i-\hat{x}_N, \text{ and }
	\chi_i^o=\chi_i-\chi_N.$
	Then, we have 
	\[
	\begin{system*}{ll}
	\dot{\bar{x}}_i^o(t)&=A\bar{x}_i^o(t)+B(v_i(t-\tau_i)-v_N(t-\tau_N)+\psi_i(t)-\psi_N(t)),\\
	{y}_i^o(t)&=C\bar{x}_i^o(t),\\
	\bar{\zeta}_i(t)&=\zeta_i(t)-\zeta_N(t)=\sum_{j=1}^{N-1}\bar{\ell}_{ij}{y}_j^o(t),\\
	\dot{\hat{x}}_i^o(t)&=A\hat{x}_i^o+B(\hat{\zeta}_{i2}(t)-\hat{\zeta}_{N2}(t))+F(\bar{\zeta}_i-C\hat{x}_i^o)\\
	\dot{\chi}_i^o(t)&=A\chi_i^o+B(v_i(t-\tau_i)-v_N(t-\tau_N))+\hat{x}_i^o(t)-\sum_{j=1}^{N-1}\bar{\ell}_{ij}{\chi}_j^o\\
	%v_i-v_N&=-K\chi_i^o
	\end{system*}
	\]
	where $\bar{\ell}_{ij}=\ell_{ij}-\ell_{Nj}$ for $i,j=1,\cdots,N-1$. Note that eigenvalues of 
	\[
	\bar{L}=[\bar{\ell}_{ij}]_{(N-1)\times(N-1)}
	\]
	are equal to the nonzero eigenvalues of $L$ (see \cite{zhang-saberi-stoorvogel-delay}).
	
	We define
	\[
	\begin{system*}{ll}
	\tilde{x}(t)&=\begin{pmatrix}
	\bar{x}_1^o(t)\\ \vdots\\ \bar{x}_{N-1}^o(t)
	\end{pmatrix},\hat{x}(t)=\begin{pmatrix}
	\hat{x}_1^o(t)\\ \vdots\\ \hat{x}_{N-1}^o(t)
	\end{pmatrix},\chi(t)=\begin{pmatrix}
	\chi_1^o(t)\\ \vdots\\ \chi_{N-1}^o(t)
	\end{pmatrix},\\
	\bar{x}^\tau(t)&=\begin{pmatrix}
	\bar{x}_1^o(t-\tau_1)\\ \vdots\\ \bar{x}_{N-1}^o(t-\tau_{N-1})
	\end{pmatrix},\chi^\tau(t)=\begin{pmatrix}
	\chi_1^o(t-\tau_1)\\ \vdots\\ \chi_{N-1}^o(t-\tau_{N-1})
	\end{pmatrix},\\
	\psi(t)&=\begin{pmatrix}
	\psi_1(t)\\ \vdots\\ \psi_N(t)\end{pmatrix},\omega(t)=\begin{pmatrix}
	\omega_1(t)\\ \vdots\\ \omega_N(t)\end{pmatrix}.
	\end{system*} 
	\]
	
	Then we have the following closed-loop system
	\begin{equation}
	\begin{system*}{l}
	\dot{\tilde{x}}(t)=(I\otimes A) \tilde{x}(t)-\rho (I\otimes BB\T P_\eps)\chi^\tau(t)+(\Pi\otimes B)\psi(t)\\
	\dot{\hat{x}}(t) = I\otimes (A-FC)\hat{x}(t)-\rho(\bar{L}\otimes B B\T P_\eps)\chi^\tau(t)+(\bar{L}\otimes KC)\tilde{x}(t) \\
	\dot{\chi} (t)= (I\otimes A-\bar{L}\otimes I)\chi(t)-\rho(I\otimes BB\T P_\eps)\chi^\tau(t)+\hat{x}(t)
	\end{system*}
	\end{equation}
	where $\Pi=\begin{pmatrix}
	I&-\mathbf{1}
	\end{pmatrix}$. By defining $e(t)=\tilde{x}(t)-\chi(t)$ and $\bar{e}(t)=(\bar{L}\otimes I)\tilde{x}(t)-\hat{x}(t)$, we can obtain  
	\begin{equation}\label{newsystem3}
	\begin{system*}{ll}
	\dot{\tilde{x}}(t)=&(I\otimes A) \tilde{x}(t)-\rho(I\otimes BB\T P_\eps)\tilde{x}^\tau(t)+\rho(I\otimes BB\T P_\eps)e^\tau(t)\\
	&\hspace{4.7cm}+(\Pi\otimes B)C_s\omega(t)\\
	\dot{\bar{e}}(t)=&I\otimes (A-FC)\bar{e}(t)+(\bar{L}\Pi\otimes B) C_s\omega(t)\\
	\dot{e}(t)=&(I\otimes A-\bar{L}\otimes I)e(t)+\bar{e}(t)+(\Pi\otimes B) C_s\omega(t)
	\end{system*}
	\end{equation}
	where $e^\tau(t)=\tilde{x}^\tau(t)-\chi^\tau(t)$. We use critical lemma \ref{hode-lemma-ddelay-con} for the stability of delayed system. The proof proceeds in two steps. 
	
	\textbf{Step c.1:}  First, we prove the stability of system \eqref{newsystem3} without delay. By combining \eqref{newsystem3} and \eqref{sys-rho}, we have
	\begin{multline}\label{newsystem33}
	\begin{pmatrix}
	\dot{\tilde{x}}\\\dot{\bar{e}}\\\dot{e}\\\dot{\omega}
	\end{pmatrix}=\left(\begin{array}{cc}
	I\otimes A-\rho (I\otimes BB\T P_\eps)&0\\
	0&I\otimes (A-FC)\\
	0&I\\
	0&0
	\end{array}\right.
	\\
	\left.\begin{array}{cc}
	\rho (I\otimes BB\T P_\eps)&(\Pi\otimes B) C_s\\
	0&(\bar{L}\Pi\otimes B) C_s\\
	I\otimes A-\bar{L}\otimes I&(\Pi\otimes B) C_s\\
	0&A_s
	\end{array}\right)
	\begin{pmatrix}
	\bar{x}\\e\\\bar{e}\\\omega
	\end{pmatrix}
	\end{multline}
	Since all eigenvalues of $\bar{L}$ are positive, we have
	\begin{equation}\label{boundapl}
	(T\otimes I)(I\otimes A-\bar{L}\otimes I)(T^{-1}\otimes I)=I\otimes A-\bar{J}\otimes I
	\end{equation}
	for a non-singular transformation matrix $T$, where	\eqref{boundapl}  is upper triangular Jordan form with $A-\lambda_i I$ for $i=1,\cdots,N-1$ on the diagonal. Since all eigenvalues of $A$ are in the closed left half plane, $A-\lambda_i I$ is stable. Therefore, all eigenvalues of $I\otimes A-\bar{L}\otimes I$ have negative real part.
	Therefore, we have that the dynamics for $e$ is asymptotically stable. Meanwhile, since $A-FC$ is Hurwitz stable, one can obtain
	\[
	\lim_{t\to \infty}\bar{e}(t)\to 0 \text{ and }\lim_{t\to \infty}e(t)\to 0
	\]
	i.e. we just need to prove the stability of 
	\[		
	\dot{\tilde{x}}_i(t)=(A-\rho BB\T P_\eps)\tilde{x}_i(t).
	\]
	
	Based on lemma \ref{low-gain} , $A- \rho BB\T P_\eps$ is Hurwitz stable for $\eps >0$ and $\rho >0.5$.
	
	\textbf{Step c.2:} In this step, since we have that $e_i$ and $\bar{e}_i$ are asymptotically stable, we just need to prove the stability of
	\[
	\dot{\tilde{x}}_i(t)= A \tilde{x}_i(t)- \rho BB\T P_\eps\tilde{x}_i(t-\tau_i)
	\]
	for $i=1,\hdots, N$. Similar to the proof of \cite[Theorem 1]{consensus-identical-delay-c}, there exists an $\eps^*$ only function of $(C,A,B)$ such that by choosing 
	\[
	\rho>\frac{1}{2\cos(\bar{\tau}\omega_{\max} )}.
	\] 	
	we can obtain the synchronization result for any $\eps \in (0,\eps^*]$.
\end{proof}
\subsubsection{Discrete-time MAS}\label{Discrete}
In this subsection, we design dynamic protocols with localized information exchange for \eqref{sys_homo} and \eqref{sys-rho} as
\begin{equation}\label{pscp3}
\begin{system}{cll}
\hat{x}_i(t+1) &=& A\hat{x}_i(t)+B\hat{\zeta}_{i2}(t)+F({\zeta}_i(t)-C\hat{x}_i(t)) \\
\chi_i(t+1) &=& A\chi_i(t)+Bv_i(t-\tau_i)+A\hat{x}_i(t)-A\hat{\zeta}_{i1}(t)\\
v_i(t) &=& -\rho K_{\eps}\chi_i(t),
\end{system}
\end{equation}
for $i=1,\ldots,N$ where $F$ is a pre-design matrix such that $A-FC$ is Schur stable, and 
\[
K_\eps=(I+B\T P_\eps B)^{-1}B\T P_\eps A,
\]
and $\rho>0$. $\eps$ is a parameter satisfying
$\eps\in (0,1]$, where for any $\eps>0$, $P_{\eps}$ is the unique solution of 
\begin{equation}\label{arespecial-dis}
A\T P_{\eps}A - P_{\eps} -  A\T P_{\eps}B(I +B\T P_{\eps} B)^{-1}B\T
P_{\eps} A +  \eps I = 0 
\end{equation}
\begin{remark}
	\eqref{arespecial-dis} is a special case of the general low-gain $H_2$ discrete algebraic Riccati equation ($H_2$-DARE), which is written as follows:
	\begin{equation}\label{aregeneral}
	A\T P_{\eps}A - P_{\eps} -  A\T P_{\eps}B(R_{\eps}+B\T P_{\eps} B)^{-1}B\T
	P_{\eps} A +  Q_{\eps} = 0 
	\end{equation}
	where $R_{\eps} > 0$, and $Q_{\eps} > 0$ is such that $Q_{\eps} \rightarrow 0$ as $\eps \rightarrow 0.$ In our case, we restrict our attention to $Q_{\eps} = \eps I$ and $R_{\eps} = I.$ However, as shown in \cite{saberi-stoorvogel-sannuti-exter}, when $A$ is neutrally stable, there exists a suitable (nontrivial) choice of $Q_{\eps}$ and $R_{\eps}$ which yields an explicit solution of \eqref{aregeneral}, of the form
	\begin{equation}\label{neutral-d}
	P_{\eps} = \eps P
	\end{equation}
	where $P$ is a positive definite matrix that satisfies $A^T P A \leq P.$
\end{remark}
 In this protocol, the agents communicate $\eta_i=(\eta_{i1}\T,\eta_{i2}\T)\T$ where $\eta_{i1}(t)=\chi_i(t)$ and $\eta_{i2}(t)=v_i(t-\tau_i)$, therefore each agent has access to the localized information $\hat{\zeta}_i=(\hat{\zeta}_{i1}\T,\hat{\zeta}_{i2}\T)\T$:

\begin{equation}\label{add_1}
\hat{\zeta}_{i1}(t)=\sum_{j=1}^N d_{ij}(\chi_i(t)-\chi_j(t)),
\end{equation}
and
\begin{equation}\label{add_2}
\hat{\zeta}_{i2}(t)=\sum_{j=1}^{N} d_{ij}(v_i(t-\tau_i)-v_j(t-\tau_j)).
\end{equation}
while ${\zeta}_i(t)$ is defined by \eqref{zeta-d}. Finally, we combine the designed protocol for homogenized network with pre-compensators designed in step $1$ to get our protocol as:
\begin{equation}\label{pscp1final}
\begin{system}{cl}
{\xi}_i(t+1)&=A_{i,h}\xi_i(t)+B_{i,h}z_i(t)-\rho E_{i,h}K_\eps\chi_i(t),\\
{\hat{x}}_i(t+1) &= A\hat{x}_i(t)+B\hat{\zeta}_{i2}(t)+F(\zeta_i(t)-C\hat{x}_i(t))\\
{\chi}_i(t+1) &= A\chi_i(t)-\rho BK_\eps \chi_i(t-\tau_i)+A\hat{x}_i(t)-A\hat{\zeta}_{i1}(t)\\
u_i(t)&=C_{i,h}\xi_i(t)-\rho D_{i,h}K_\eps \chi_i(t),
\end{system}
\end{equation}

Then, we have the following theorem for \emph{scalable} output synchronization of heterogeneous discrete-time MAS in presence of input saturation.

\begin{theorem}\label{thm-dis}
	Consider a heterogeneous network of $N$ agents \eqref{hete_sys} with local information \eqref{local} satisfying Assumption \ref{ass2} and a given $\bar{\tau}$. Let the associated network communication be given by \eqref{zeta-d}. Let $\mathbb{G}^N$ be the set of network graphs as defined in Definition \ref{def1}.
	
	Then  the scalable output synchronization problem as stated in Problem \ref{prob_sync} is solvable.
	In particular, there exist a $\rho^*>0.5$ and for any fixed $\rho>\rho^*$, there exists an $\eps^*$ such that for any $\eps \in (0,\eps^*]$, dynamic protocol given by \eqref{pscp1final} and \eqref{pscp3} solves the scalable output synchronization problem for any $N$ and any graph
	$\mathcal{G}\in\mathbb{G}^N$.
\end{theorem}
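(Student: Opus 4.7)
The plan is to mirror the two-step structure of the proof of Theorem \ref{thm-con}, replacing continuous-time Hurwitz/Riccati arguments by their Schur/discrete counterparts. First I would define shifted variables $\bar{x}_i^o = \bar{x}_i - \bar{x}_N$, $y_i^o = y_i - y_N$, $\hat{x}_i^o = \hat{x}_i - \hat{x}_N$, and $\chi_i^o = \chi_i - \chi_N$, and stack them into vectors $\tilde{x}, \hat{x}, \chi$ indexed by $i = 1,\ldots,N-1$. With $\bar{\ell}_{ij} = d_{ij} - d_{Nj}$ defining a reduced $(N-1)\times (N-1)$ matrix $\bar{L}$, the standard argument (see \cite{zhang-saberi-stoorvogel-delay}) gives that the eigenvalues of $\bar{L}$ equal the nonzero eigenvalues of the underlying graph matrix, and under $\mathcal{G} \in \mathbb{G}^N$ all of them lie strictly inside the unit disc. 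Introducing $e = \tilde{x} - \chi$ and $\bar{e} = (\bar{L} \otimes I)\tilde{x} - \hat{x}$, the closed-loop system takes a cascade form analogous to \eqref{newsystem3}, with $\bar{e}$ driven only by $I \otimes (A - FC)$ and the $\omega$-term forced by the exponentially decaying pre-compensator dynamics \eqref{sys-rho}.

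Step d.1 treats the delay-free case. Since $A - FC$ is Schur by the choice of $F$ and each $A_{i,s}$ is Schur by construction of the pre-compensator, both $\bar{e}(t)$ and $\omega(t)$ tend to zero. For the $e$-dynamics, a Jordan-type block transformation on $\bar{L}$ (analogous to \eqref{boundapl}) yields an upper-triangular system whose diagonal blocks are obtained from $A$ by pulling its eigenvalues strictly inside the unit disc via the communication matrix; since the eigenvalues of $A$ lie inside or on the unit circle, each block is Schur. What remains is the per-agent Schur stability of
\begin{equation*}
\tilde{x}_i(t+1) = (A - \rho B K_\eps)\tilde{x}_i(t),
\end{equation*}
which, for $\rho > 1/2$ and every $\eps \in (0,1]$, follows from the standard low-gain $H_2$-DARE identity \eqref{arespecial-dis}, paralleling the continuous-time low-gain lemma.

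Step d.2 reintroduces the delay. Having established that $e, \bar{e}, \omega \to 0$, output synchronization reduces to Schur stability of the delayed agent dynamics
\begin{equation*}
\tilde{x}_i(t+1) = A \tilde{x}_i(t) - \rho B K_\eps \tilde{x}_i(t - \tau_i), \qquad i = 1,\ldots,N.
\end{equation*}
I would invoke the discrete-time analogue of the delay lemma used in the proof of Theorem \ref{thm-con}, applied to the unit-circle characteristic equation $\det(e^{j\omega} I - A + \rho\, e^{-j\omega \tau_i} B K_\eps) = 0$. Using \eqref{omega-max-d}, the delay-induced phase shift on the critical frequencies is bounded by $\bar{\tau}\omega_{\max}$, and the low-gain parameter $\eps$ controls $\|K_\eps\|$. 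Choosing
\begin{equation*}
\rho > \frac{1}{2\cos(\bar{\tau}\omega_{\max})},
\end{equation*}
one obtains an $\eps^*$ depending only on $(C,A,B)$ and $\bar{\tau}$ (hence scale-free in $N$) such that for every $\eps \in (0, \eps^*]$ the delayed system is Schur stable, so $\tilde{x}(t) \to 0$ and \eqref{synch_out} follows.

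The main obstacle I expect is Step d.2: the discrete-time delay argument is more delicate than the continuous one because the multiplicative phase $e^{-j\omega \tau_i}$ interacts with the $(I + B\T P_\eps B)^{-1}$ factor hidden in $K_\eps$, so extracting a uniform-in-$N$ bound on the characteristic roots requires careful control of $\|P_\eps\|$ and of the perturbation of the on-circle eigenvalues of $A$ under the delayed low-gain feedback. Once that estimate is in hand, the rest is essentially a discrete transcription of the continuous-time calculation, with $e^{j\omega}$ playing the role of $j\omega$ and Schur stability replacing Hurwitz stability throughout.
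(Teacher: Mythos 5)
Your proposal follows the same route as the paper's proof: subtract agent $N$, stack the differences, introduce the error coordinates $e$ and $\bar e$ to obtain a cascade, prove delay-free stability first, and then handle the delay per agent via the frequency-domain criterion with $\rho>\tfrac{1}{2\cos(\bar\tau\omega_{\max})}$ and a sufficiently small $\eps$; Step d.2 is resolved exactly as in the paper (by appeal to the cited delay result together with Lemma \ref{hode-lemma-ddelay}), so the "obstacle" you flag is also where the paper defers to a reference rather than redoing the estimate. One intermediate claim, however, is false as stated and must be repaired the way the paper does it: in Step d.1 you assert that $A-\rho BK_\eps$ is Schur for $\rho>1/2$ and \emph{every} $\eps\in(0,1]$, "paralleling the continuous-time low-gain lemma." The continuous-time fact (Lemma \ref{low-gain}) does not carry over: in discrete time the robustness region of the low-gain feedback is $\Omega_\delta$ in \eqref{hoded-stablefield}, whose real-axis left endpoint is $1+(1-\sqrt{1+\gamma_\delta})/\gamma_\delta>\tfrac12$ for $\gamma_\delta>0$ and only tends to $\tfrac12$ as $\eps\to0$; hence for a fixed $\rho$ close to $\tfrac12$ you already need $\eps$ small in the delay-free step. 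The paper invokes Lemma \ref{lemma-full} precisely to produce this $\eps^*$, and since the theorem only asks for existence of $\eps^*$, your overall argument survives once this is corrected.

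There are also some discrete-time bookkeeping points to fix, though they do not change the method. The reduction in the discrete case is written with the row-stochastic matrix $D$, not the Laplacian: the reduced matrix is $\tilde D=[d_{ij}-d_{Nj}]$, whose eigenvalues are the eigenvalues of $D$ unequal to $1$ (hence in the open unit disc), not "the nonzero eigenvalues" of a Laplacian; the $e$-dynamics matrix is $\tilde D\otimes A$, which the paper shows is Schur by the Kronecker eigenvalue product $|\lambda_i\mu_j|<1$ rather than by a subtraction of the form $I\otimes A-\bar L\otimes I$; and the second error coordinate must be $\bar e=((I-\tilde D)\otimes I)\tilde x-\hat x$ (as in \eqref{newsystem4}) for its dynamics to be driven only by $I\otimes(A-FC)$ plus the decaying $\omega$-term--with your $\bar L$-based definition the cascade does not decouple.
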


\begin{proof}[Proof of Theorem \ref{thm-dis}]	
	Let $\bar{x}_i^o(t)=\bar{x}_i(t)-\bar{x}_N(t)$, $y_i^o(t)=y_i(t)-y_N(t)$, $\hat{x}_i^o(t)=\hat{x}_i(t)-\hat{x}_N(t)$, and $\chi_i^o(t)=\chi_i(t)-\chi_N(t)$. Then, we have 
	\[
	\begin{system*}{ll}
	{\bar{x}}_i^o(t+1)&=A\bar{x}_i^o(t)+B(v_i(t-\tau_i)-v_N(t-\tau_N)\\
	&\hspace{3cm}+\psi_i(t)-\psi_N(t)),\\
	{y}_i^o(t)&=C\bar{x}_i^o(t),\\
	\bar{\zeta}_i(t)&=\zeta_i(t)-\zeta_N(t)=\frac{1}{1+d_{in}(i)}\sum_{j=1}^{N-1}{\ell}_{ij}{y}_j^o(t),\\
	{\hat{x}}_i^o(t+1)&=A\hat{x}_i^o(t)+B(\hat{\zeta}_{i2}(t)-\hat{\zeta}_{N2}(k))+F(\bar{\zeta}_i^d-C\hat{x}_i^o)\\
	{\chi}_i^o(t+1)&=A\chi_i^o(t)+B(v_i(t-\tau_i)-v_N(t-\tau_N))+A\hat{x}_i^o(t)\\
	&\hspace{3cm}-\frac{1}{1+d_{in}(i)}A\sum_{j=1}^{N-1}{\ell}_{ij}{\chi}_j^o(t)\\
	%v_i-v_N&=-K\chi_i^o
	\end{system*}
	\]
%	where $\bar{\ell}_{ij}=\ell_{ij}-\ell_{Nj}$ for $i,j=1,\cdots,N-1$. Meanwhile,
%	\[
%	\bar{L}=[\bar{\ell}_{ij}]_{(N-1)\times(N-1)}
%	\]
%	satisfies Lemma \ref{LbarL}. 
	
We define
\[
\begin{system*}{ll}
	\tilde{x}(t)&=\begin{pmatrix}
	\bar{x}_1^o(t)\\ \vdots\\ \bar{x}_{N-1}^o(t)
	\end{pmatrix},\hat{x}(t)=\begin{pmatrix}
	\hat{x}_1^o(t)\\ \vdots\\ \hat{x}_{N-1}^o(t)
	\end{pmatrix},\chi(t)=\begin{pmatrix}
	\chi_1^o(t)\\ \vdots\\ \chi_{N-1}^o(t)
	\end{pmatrix},\\
	\bar{x}^\tau(t)&=\begin{pmatrix}
	\bar{x}_1^o(t-\tau_1)\\ \vdots\\ \bar{x}_{N-1}^o(t-\tau_{N-1})
	\end{pmatrix},\chi^\tau(t)=\begin{pmatrix}
	\chi_1^o(t-\tau_1)\\ \vdots\\ \chi_{N-1}^o(t-\tau_{N-1})
	\end{pmatrix},\\
	\psi(t)&=\begin{pmatrix}
	\psi_1(t)\\ \vdots\\ \psi_N(t)\end{pmatrix},\omega(t)=\begin{pmatrix}
	\omega_1(t)\\ \vdots\\ \omega_N(t)\end{pmatrix}.
\end{system*} 
\]

	Then we have the following closed-loop system
	\begin{equation}
\begin{system*}{ll}
\tilde{x}(t+1)=&(I\otimes A) \tilde{x}(t)-\rho (I\otimes BK_\eps)\chi^\tau(t)+(\Pi\otimes B)\psi(t)\\
\hat{x}(t+1) =& I\otimes (A-FC)\hat{x}(t)-\rho[(I-\tilde{D})\otimes B K_\eps]\chi^\tau(t)\\
&\qquad\qquad+[(I-\tilde{D})\otimes FC]\tilde{x}(t)\\
\chi(t+1) =& [\tilde{D}\otimes A]\chi(t)-\rho(I\otimes BK_\eps)\chi^\tau(t)+(I\otimes A)\hat{x}(t)
\end{system*}
\end{equation}
where $\Pi=\begin{pmatrix}
I&-\mathbf{1}
\end{pmatrix}$ and $\tilde{D}=[\tilde{d}_{ij}]\in \R^{(N-1)\times (N-1)}$, and $\tilde{d}_{ij}=d_{ij}-d_{Nj}$.  We have that the eigenvalues of $\tilde{D}$ are equal to the eigenvalues of $D$ unequal to $1$ (see \cite{wang-saberi-yang}).

By defining $e(t)=\tilde{x}(t)-\chi(t)$ and $\bar{e}(t)=((I-\tilde{D})\otimes I)\tilde{x}(t)-\hat{x}(t)$, we can obtain  
	\begin{equation}\label{newsystem4}
\begin{system*}{ll}
\tilde{x}(t+1)&=(I\otimes A) \tilde{x}(t)-\rho(I\otimes BK_\eps)\tilde{x}^\tau(t)+\rho(I\otimes BK_\eps)e^\tau(t)\\
&\hspace{4.7cm}+(\Pi\otimes B)C_s\omega(t)\\
\bar{e}(t+1)&=I\otimes (A-FC)\bar{e}(t)+((I-\tilde{D})\Pi\otimes B) C_s\omega(t)\\
e(t+1)&=(\tilde{D}\otimes A)e(t)+\bar{e}(t)+(\Pi\otimes B) C_s\omega(t)
\end{system*}
\end{equation}
where $e^\tau(k)=\tilde{x}^\tau(t)-\chi^\tau(t)$. We use the critical lemma \ref{hode-lemma-ddelay} for the stability of delayed system.  The proof proceeds in two steps.
	
\textbf{Step d.1:}  First, we prove the stability of system \eqref{newsystem4} without delay.  By combining \eqref{newsystem4} and \eqref{sys-rho}, we have
\begin{multline}\label{newsystem44}
\begin{pmatrix}
{\tilde{x}}(t+1)\\{\bar{e}}(t+1)\\{e}(t+1)\\{\omega}(t+1)
\end{pmatrix}=\left(\begin{array}{cc}
I\otimes A-\rho(I\otimes BK_\eps)&0\\
0&I\otimes (A-FC)\\
0&I\\
0&0
\end{array}\right.
\\
\left.\begin{array}{cc}
\rho(I\otimes BK_\eps)&(\Pi\otimes B) C_s\\
0&((I-\tilde{D})\Pi\otimes B) C_s\\
\tilde{D}\otimes A&(\Pi\otimes B) C_s\\
0&A_s
\end{array}\right)
\begin{pmatrix}
\bar{x}\\e\\\bar{e}\\\omega
\end{pmatrix}
\end{multline}
The eigenvalues of $\tilde{D}\otimes A$ are of the form
$\lambda_i \mu_j$, with $\lambda_i$ and $\mu_j$ eigenvalues of
$\tilde{D}$ and $A$, respectively \cite[Theorem 4.2.12]{horn-johnson}. Since $|\lambda_i|<1$ and
$|\mu_j|\leq 1$, we find $\tilde{D}\otimes A$ is Schur stable. Meanwhile, we have that $A-FC$ is Schur stable. Then we have
	\[
\lim_{t\to \infty}\bar{e}(t)\to 0 \text{ and }\lim_{t\to \infty}e(t)\to 0
\]
Therefore, we have that the dynamics for $e_i(t)$ and $\bar{e}_i(t)$ are asymptotically stable.

According to the above result, for \eqref{newsystem44} we just need to prove the stability of 
\[
{\tilde{x}}(t+1)=[I\otimes (A- \rho BK_\eps)]\tilde{x}(t)
\]
or Schur stability of $A- \rho BK_\eps$. Based on Lemma \ref{lemma-full}, there exist $\rho >0.5$ and $\eps^* >0$ such that $A- \rho BK_\eps$ is Schur stable
for $\eps \in (0,\eps^*]$.

\textbf{Step d.2:} In this step, since we have that dynamics of $e_i(t)$ and $\bar{e}_i(t)$ are asymptotically stable, we just need to prove the stability of
\[
\tilde{x}_i(t+1)= A \tilde{x}_i(t)- \rho  BK_\eps\tilde{x}_i(t-\tau_{i})
\]
for $i=1,\hdots, N$. Similar to the proof of \cite[Theorem 1]{wang-saberi-stoorvogel-grip-yang}, there exists an $\eps^*$ only function of $(C,A,B)$ such that by choosing 
\[
\rho>\frac{1}{2\cos(\bar{\tau}\omega_{\max} )}
\] 	
we can obtain the synchronization result for any $\eps \in (0,\eps^*]$.

\end{proof}

\section{Simulation Results}
In this section, we will illustrate the effectiveness of our protocols with a numerical example for output synchronization of continuous heterogeneous MAS with partial-state coupling in presence of input delays. We show that our protocol design \eqref{pscp1final} is scale-free and it works for any graph with any number of agents.
Consider the agents models \eqref{hete_sys} with:
\begin{equation*}
\begin{system*}{cl}
A_1=\begin{pmatrix}
0&1&0&0\\0&0&1&0\\0&0&0&1\\0&0&0&0
\end{pmatrix}, B_1=\begin{pmatrix}
0&1\\0&0\\1&0\\0&1
\end{pmatrix}, C_1=\begin{pmatrix}
1&0&0&0
\end{pmatrix},C^m_1=I\\
A_i=\begin{pmatrix}
0&1&0\\0&0&1\\0&0&0
\end{pmatrix},B_i=\begin{pmatrix}
0\\0\\1
\end{pmatrix},C_i=\begin{pmatrix}
1&0&0
\end{pmatrix},
C^m_i=I, \text{ for $i=2,4$}
\end{system*} 
\end{equation*}
and for $i=3,5$
\begin{equation*}
\begin{system*}{cl}
A_i=\begin{pmatrix}
-1&0&0&-1&0\\0&0&1&1&0\\0&1&-1&1&0\\0&0&0&1&1\\-1&1&0&1&1
\end{pmatrix},B_i=\begin{pmatrix}
0&0\\0&0\\0&1\\0&0\\1&0
\end{pmatrix},C_i=\begin{pmatrix}
0&0&0&1&0
\end{pmatrix},
C^m_i=I,
\end{system*}
\end{equation*}
Note that $\bar{n}_d=3$, which is the degree of infinite zeros of $(C_2,A_2,B_2)$. We choose $n_q=3$ and matrices $A,B,C$ as following.
\begin{equation*}
\begin{system*}{cl}
A&=\begin{pmatrix}
0&1&0\\0&0&1\\0&-1&0
\end{pmatrix},\quad B=\begin{pmatrix}
0\\0\\1
\end{pmatrix}, \quad C=\begin{pmatrix}
1&0&0
\end{pmatrix}\\
\end{system*}
\end{equation*}
where
$K=\begin{pmatrix}
30&30&10
\end{pmatrix}\T$ and $H=\begin{pmatrix}
6&10&0
\end{pmatrix}$.
We consider two different heterogeneous MAS with different number of agents and different communication topologies to show that the designed protocols are independent of the
communication networks and the number of agents $N$.
\begin{itemize}
	\item \emph{Case $1$:} Consider a MAS with $3$ agents with agent models $(C_i, A_i, B_i)$ for $i \in \{1,\hdots,3\}$, and directed communication topology with $a_{21}=a_{32}=1$, and input delays $\tau_1=0.1,\tau_2=0.2, \tau_3=0.3$.
	\item \emph{Case $2$:} In this case, we consider a MAS with $3$ agents with agent models $(C_i, A_i, B_i)$ for $i \in \{1,\hdots,5\}$ and directed communication networks with $a_{15}=a_{21}=a_{32}=a_{43}=a_{54}=1$, and input delays $\tau_1=0.1,\tau_2=0.2, \tau_3=0.3, \tau_4=0.4$ and $\tau_5=0.5$.
\end{itemize}
\begin{figure}[t]
	\includegraphics[width=9cm, height=8cm]{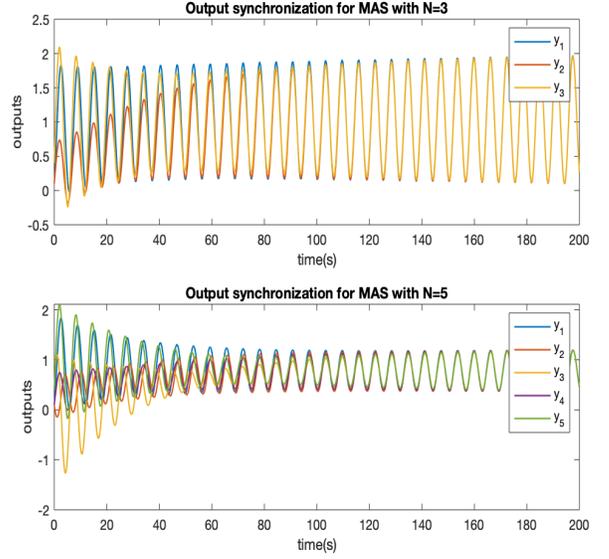}
	\centering
	\caption{Output synchronization for MAS with $N=3$ and $N=5$ agents}\label{results}
\end{figure} 
%\begin{figure}[t]
%	\includegraphics[width=9cm, height=5cm]{Results_case2}
%	\centering
%	\caption{Output synchronization for communication networks of case $2$}\label{Results_case2}
%\end{figure} 
%
%
%Figures \ref{Results_case1}-\ref{Results_case3} show that output synchronization is achieved for all three cases. The simulation results also confirm that the protocol design is independent of the communication graph and is scale-free so that we can achieve output synchronization with one-shot protocol design, for any graph with any number of agents.

\section{Appendix}
\subsection{Stability of delayed continuous-time systems}
Following lemma from \cite{consensus-identical-delay-c} is a classical results in the study of stability of continues-time systems.
\begin{lemma}\label{hode-lemma-ddelay-con}
	Consider a linear time-delay system
	\begin{equation}\label{hode-lemma-system-c}
	\dot{x}(t)=Ax(t)+\sum_{i=1}^{m}A_{i}x(t-\tau_{i}),
	\end{equation}
	where $x(t)\in\R^{n}$ and $\tau_{i}\in\mathbb{R}$. Assume that 
	$A+\sum_{i=1}^{m}A_{i}$ is Hurwitz stable. Then,
	\eqref{hode-lemma-system-c} is asymptotically stable for
	$\tau_1,\ldots,\tau_N\in[0,\bar{\tau}]$ if
	\[
	\det[j\omega I-A- \sum_{i=1}^{m}e^{-j\omega\tau_i}A_{i}]\neq 0,
	\]
	for all $\omega\in\R$, and for all
	$\tau_1,\ldots,\tau_N\in[0,\bar{\tau}]$.
\end{lemma}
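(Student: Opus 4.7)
\begin{proof*}[Proof proposal]
The plan is to prove this classical stability criterion via a homotopy/continuity argument on the roots of the characteristic quasipolynomial. Define
\[
\Delta(\lambda; \tau_1,\ldots,\tau_m) := \det\!\Bigl[\lambda I - A - \sum_{i=1}^{m} e^{-\lambda \tau_i} A_i\Bigr].
\]
The system \eqref{hode-lemma-system-c} is asymptotically stable if and only if every root of $\Delta(\lambda;\tau_1,\ldots,\tau_m)=0$ lies in the open left half plane. The goal is to show that under both hypotheses this holds for every $(\tau_1,\ldots,\tau_m)\in[0,\bar\tau]^m$.

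First I would treat the case $\tau_1=\cdots=\tau_m=0$. Here the characteristic equation collapses to $\det(\lambda I - A - \sum_i A_i)=0$, whose roots are exactly the eigenvalues of $A+\sum_i A_i$. By the Hurwitz assumption these all lie in the open left half plane, so the system is stable at the origin of the delay parameter space. Next I would invoke the standard fact about retarded functional differential equations that the spectrum of \eqref{hode-lemma-system-c} has only finitely many roots in any right half plane $\{\re\lambda\ge\alpha\}$, and that these roots vary continuously with respect to the parameters $\tau_1,\ldots,\tau_m$ (this is a consequence of Rouch\'e's theorem applied on large contours, using that the exponential terms are bounded on such contours for bounded $\tau_i$).

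The core step is then a connectedness argument. Consider the straight-line homotopy $s\mapsto (s\tau_1^\star,\ldots,s\tau_m^\star)$ in $[0,\bar\tau]^m$ joining the origin to any target point $(\tau_1^\star,\ldots,\tau_m^\star)$. If the system were unstable at this target, then by continuity of the finitely many right-half-plane roots, at least one root $\lambda(s)$ must move from the open left half plane (at $s=0$) into the closed right half plane as $s$ increases. By the intermediate value theorem applied to $\re\lambda(s)$, there exists an intermediate $s^\star\in(0,1]$ and some $\omega\in\R$ such that $\lambda(s^\star)=j\omega$, i.e.\
\[
\Delta\bigl(j\omega;\,s^\star\tau_1^\star,\ldots,s^\star\tau_m^\star\bigr)=0.
\]
Since the point $(s^\star\tau_1^\star,\ldots,s^\star\tau_m^\star)$ lies in $[0,\bar\tau]^m$, this directly contradicts the second hypothesis of the lemma. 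Hence no root can cross the imaginary axis, and the system is asymptotically stable throughout $[0,\bar\tau]^m$.

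The main obstacle I anticipate is justifying the continuity of roots and ruling out roots ``arriving from infinity'' during the homotopy. This requires the retarded (as opposed to neutral) structure of \eqref{hode-lemma-system-c}: since the highest-order term $\lambda I$ carries no delay, one can show that for any $M>0$ there is a half plane $\re\lambda\ge\alpha(M,\bar\tau)$ that contains no roots for any $(\tau_1,\ldots,\tau_m)\in[0,\bar\tau]^m$, so the potentially-unstable spectrum stays in a compact set uniformly in the parameters. With this uniform confinement, Rouch\'e's theorem on a fixed large rectangle in the complex plane gives genuine continuity of the root count in each half plane, making the crossing argument rigorous. The rest of the proof is just bookkeeping.
\end{proof*}
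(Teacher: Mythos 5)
Your argument is sound and is exactly the classical root-crossing proof of this criterion: stability at zero delay from the Hurwitz assumption, uniform boundedness of closed-right-half-plane roots (the retarded structure gives $|\lambda|\leq\|A\|+\sum_i\|A_i\|$ whenever $\re\lambda\geq 0$), and a Rouch\'e/continuity argument along the delay homotopy showing instability could only arise through a purely imaginary root, which the determinant condition forbids. Note that the paper itself offers no proof of this lemma --- it is quoted as a known result from the cited reference --- and your proposal reproduces the standard argument behind that reference, so there is no discrepancy to flag.
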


%\begin{lemma}\label{LbarL}
%	Let a Laplacian matrix $L\in \R^{N\times N}$ be given associated
%	with a graph that contains a directed spanning tree. We define
%	$\bar{L}\in \R^{(N-1)\times (N-1)}$ as the matrix
%	$\bar{L}=[\bar{\ell}_{ij}]$ with
%	\[
%	\bar{\ell}_{ij} = \ell_{ij}-\ell_{Nj}.
%	\]
%	Then the eigenvalues of $\bar{L}$ are equal to the nonzero
%	eigenvalues of $L$.
%\end{lemma}

\subsection{Stability of delayed discrete-time systems}
We also recall the following lemma from \cite{zhang-saberi-stoorvogel-continues-discrete} for stability of discrete-time systems.
\begin{lemma}\label{hode-lemma-ddelay}
	Consider a linear time-delay system
	\begin{equation}\label{hoded-system1}
	x(t+1)=Ax(t)+\sum_{i=1}^{m}A_{i}x(t-\tau_{i}),
	\end{equation}
	where $x(t)\in\R^{n}$ and $\tau_{i}\in\N^+$. Suppose
	$A+\sum_{i=1}^{m}A_{i}$ is Schur stable. Then, \eqref{hoded-system1}
	is asymptotically stable if
	\[
	\det[e^{j\omega}I-A-
	\sum_{i=1}^{m}e^{-j\omega\tau_i^r}A_{i}]\neq 0,
	\]
	for all $\omega\in[-\pi,\pi]$ and for all $\tau_i\in\overline{[0,\bar{\tau}]}$ \text{ for} ($i=1,\ldots, N$).
\end{lemma}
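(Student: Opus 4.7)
The plan is to mirror the continuation/homotopy argument used in the continuous-time analogue (Lemma \ref{hode-lemma-ddelay-con}), reducing asymptotic stability of \eqref{hoded-system1} to the location of zeros of the symbol
\[
\Phi(z) := \det\Bigl[zI - A - \sum_{i=1}^{m} z^{-\tau_i} A_i\Bigr]
\]
and then showing that every zero of $\Phi$ lies in the open unit disc. Since each $\tau_i \in \overline{[0,\bar\tau]}$, lifting the state via $X(t) = (x(t)\T, x(t-1)\T, \ldots, x(t-\bar\tau)\T)\T$ turns \eqref{hoded-system1} into a finite-dimensional linear system $X(t+1) = \mathcal{A}X(t)$ whose characteristic polynomial equals $z^{n\bar\tau}\Phi(z)$ up to trivial factors at the origin. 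Hence asymptotic stability is equivalent to $\Phi$ having no zero in $\{z\in\C : |z|\ge 1\}$.

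I would then introduce the one-parameter family
\[
\Phi_s(z) := \det\Bigl[zI - A - \sum_{i=1}^{m} z^{-s\tau_i} A_i\Bigr], \qquad s\in[0,1],
\]
which interpolates between $\Phi_0(z) = \det[zI-(A+\sum_i A_i)]$ and $\Phi_1 = \Phi$. By the Schur-stability hypothesis on $A+\sum_i A_i$, all zeros of $\Phi_0$ lie in $\{|z|<1\}$. Because $\Phi_s(z) = z^n + O(z^{n-1})$ uniformly in $s$ as $|z|\to\infty$, any zero of $\Phi_s$ in $\{|z|\ge 1\}$ is confined to a fixed compact annulus $\{1\le|z|\le R\}$. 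The number of such zeros, computed via the argument principle on the contours $|z|=1$ and $|z|=R$, is an integer-valued function of $s$ that is continuous on any subinterval of $[0,1]$ free of zeros of $\Phi_s$ on the unit circle.

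The decisive step is excluding zero crossings of $\Phi_s$ on $|z|=1$. A crossing at some $s^\star\in(0,1]$ and $\omega\in[-\pi,\pi]$ would give
\[
\det\Bigl[e^{j\omega}I - A - \sum_{i=1}^{m} e^{-j\omega s^\star\tau_i} A_i\Bigr] = 0
\]
with $s^\star\tau_i\in[0,\bar\tau]$, contradicting the frequency-domain hypothesis of the lemma (read in its natural extension to the full real range of delays, which is the role of the superscript in the notation $\tau_i^r$). The number of zeros of $\Phi_s$ outside the unit disc is therefore identically zero throughout $s\in[0,1]$, so $\Phi=\Phi_1$ has no zero in $\{|z|\ge 1\}$, whence $\mathcal{A}$ is Schur stable and \eqref{hoded-system1} is asymptotically stable. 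The principal obstacle I anticipate is making the continuous dependence of the zeros on $s$ rigorous, since $\Phi_s$ is not a polynomial in $z$ for non-integer $s\tau_i$; this is handled by working with $\Phi_s$ as an analytic function on $\C\setminus\{0\}$ and invoking the argument principle on nested circles, the leading term $z^n$ ruling out escape to infinity.
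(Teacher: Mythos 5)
The paper itself does not prove this lemma: it is recalled verbatim from \cite{zhang-saberi-stoorvogel-continues-discrete}, so there is no in-paper argument to compare against. Your overall strategy --- lift \eqref{hoded-system1} to a finite-dimensional system so that asymptotic stability becomes the absence of zeros of $\Phi$ in $\{|z|\geq 1\}$, then deform the delays from $0$ to $\tau_i$ and use the frequency-domain hypothesis to forbid crossings of the unit circle --- is the standard continuation argument behind results of this type, and your reading of $\tau_i^r$ as a real-valued relaxation of the integer delays is the right interpretation of the (somewhat garbled) statement: without it the hypothesis would only be available when $s\tau_i$ is an integer and the homotopy would be useless.

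There is, however, a concrete gap at the step you yourself flag. For non-integer $s\tau_i$ the function $z^{-s\tau_i}$ is \emph{not} single-valued on $\C\setminus\{0\}$, so $\Phi_s$ is not an analytic function there, and the argument principle on the circles $|z|=1$ and $|z|=R$ (both of which encircle the origin) cannot be invoked; the integer you propose to propagate in $s$ is not even well defined by your construction. The usual repair is a change of variables: for $|z|\geq 1$ set $z=e^{w}$ with $\re w\geq 0$ and $\im w\in[-\pi,\pi]$, so that $G_s(w)=\det[e^{w}I-A-\sum_{i}e^{-s\tau_i w}A_i]$ is entire in $w$; zeros of the lifted characteristic polynomial in $|z|\geq 1$ correspond to zeros of $G_1$ in the closed right half of the strip, Hurwitz's theorem gives continuous dependence of these zeros on $s$, the dominance of $e^{nw}$ as $\re w\to\infty$ (using $|e^{-s\tau_i w}|\leq 1$ there) rules out escape to the right, and the hypothesis $\det[e^{j\omega}I-A-\sum_i e^{-j\omega s\tau_i}A_i]\neq 0$ excludes crossings of $\re w=0$. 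Even then one must separately control the horizontal edges $\im w=\pm\pi$ of the strip, where your branch of $z^{-s\tau_i}$ is discontinuous and through which zeros of $G_s$ could in principle enter or leave the counting region without touching $\re w=0$; this requires an additional argument (e.g.\ conjugate symmetry of $G_s$ plus a no-zero condition in a slightly widened strip). So the skeleton is right and matches the source, but the analyticity claim on which your zero count rests is false as written, and closing it is where the real technical work of the lemma lives.
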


\subsection{Robustness of low-gain}
Now we recall the following lemmas for the robustness of low-gain design from \cite{lee-kim-shim, consensus-identical-delay-c}.
\begin{lemma}\label{low-gain}
	$A- \rho BB\T P_\eps$ is Hurwitz stable for any $\rho\in \{s\in \mathbb{C} | \re (s) \ge
	\frac{1}{2}\}$ where $P_\eps$ is the unique positive definite solution of
	 \begin{equation}
	 A\T P_{\eps} + P_{\eps} A -  P_{\eps} BB\T
	 P_{\eps} + \eps I = 0 .
	 \end{equation}
\end{lemma}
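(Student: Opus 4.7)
The plan is to use $P_\eps$ itself as a Lyapunov-type certificate, relying on the fact that the ARE already gives an explicit expression for $A\T P_\eps + P_\eps A$. Because $\rho$ is allowed to be complex with $\re(\rho)\geq \tfrac{1}{2}$, the matrix $M:=A-\rho BB\T P_\eps$ is in general complex, so I will work with the conjugate-transpose form of the Lyapunov identity rather than the real transpose.

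First I would form
\[
M^{*}P_\eps + P_\eps M \;=\; A\T P_\eps + P_\eps A \;-\;(\rho+\bar\rho)\,P_\eps BB\T P_\eps,
\]
using that $P_\eps$ is real symmetric positive definite and $B$ is real. Substituting the ARE $A\T P_\eps+P_\eps A = P_\eps BB\T P_\eps-\eps I$ and using $\rho+\bar\rho=2\re(\rho)$ yields
\[
M^{*}P_\eps + P_\eps M \;=\; -\bigl(2\re(\rho)-1\bigr)\,P_\eps BB\T P_\eps \;-\;\eps I.
\]
For $\re(\rho)\geq \tfrac{1}{2}$ the first term is negative semidefinite (since $P_\eps BB\T P_\eps\succeq 0$) and the second term is $-\eps I\prec 0$, so the whole right-hand side is negative definite.

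Next I would convert this Lyapunov inequality into the eigenvalue statement. Take any eigenvalue $\lambda\in\C$ of $M$ with a (complex) eigenvector $v\neq 0$. Left- and right-multiplying the identity above by $v^{*}$ and $v$ gives
\[
(\bar\lambda+\lambda)\,v^{*}P_\eps v \;=\; -\bigl(2\re(\rho)-1\bigr)\,\|B\T P_\eps v\|^{2} - \eps\,\|v\|^{2}.
\]
Since $P_\eps\succ 0$ we have $v^{*}P_\eps v>0$, and the right-hand side is strictly negative because $\eps>0$ and $\|v\|>0$. Therefore $2\re(\lambda)\,v^{*}P_\eps v<0$, forcing $\re(\lambda)<0$. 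As $\lambda$ was an arbitrary eigenvalue of $M$, this shows $A-\rho BB\T P_\eps$ is Hurwitz stable.

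The only subtle point, and the place I would be most careful, is justifying the passage to complex eigenvectors: because $\rho$ is complex one cannot invoke the usual real-symmetric Lyapunov theorem directly, but the quadratic form $v\mapsto v^{*}P_\eps v$ with $P_\eps$ real symmetric positive definite is still a genuine positive-definite Hermitian form on $\C^{n}$, so the above eigenvalue argument is clean. Existence and positive definiteness of $P_\eps$ itself is standard from $(A,B)$ stabilizable (which is implied here since $(C,A,B)$ has no invariant zeros and is controllable by the standing assumptions), so I would only cite this rather than redo it.
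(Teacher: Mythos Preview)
Your argument is correct. The Lyapunov computation
\[
M^{*}P_\eps + P_\eps M = -(2\re(\rho)-1)\,P_\eps BB\T P_\eps - \eps I
\]
follows exactly from the ARE, and since $P_\eps$ is real symmetric positive definite it defines a genuine positive-definite Hermitian form on $\C^{n}$, so the eigenvalue extraction $2\re(\lambda)\,v^{*}P_\eps v<0$ is valid even for complex $M$. The existence and positive definiteness of $P_\eps$ is indeed standard once $(A,B)$ is stabilizable and $Q=\eps I\succ 0$.

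As for comparison: the paper does not actually prove this lemma. It is stated in the appendix as a result \emph{recalled} from the cited references, with no argument supplied. Your direct Lyapunov proof is the standard route used in that literature, so you have in effect filled in what the paper left to citation; there is nothing in the paper itself to compare against.
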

%\begin{proof}
%	Based on ARE \eqref{arespecial-con}, for a positive definite matrix $P_\eps$, we have
%	\begin{align*}
%		&P_\eps(A- \rho BB\T P_\eps)+(A- \rho BB\T P_\eps)\T P_\eps\\
%		\leq &-\eps I-(2\rho -1)P_\eps BB\T P_\eps<0
%	\end{align*}
%	for $\rho\in \{s\in \mathbb{C} | \re (s) \ge
%	\frac{1}{2}\}$.
%\end{proof}

\begin{lemma}\label{lemma-full}
	Consider a linear uncertain system,
	\begin{equation}\label{linunc}
		x(t+1)=Ax(t)+\lambda Bu(t), \qquad x(0)=x_{0},
	\end{equation}
	where $\lambda\in \mathbb{C}$ is unknown. Assume that $(A,B)$ is
	stabilizable and $A$ has all its eigenvalues in the closed unit
	disc. A low-gain state feedback $u=F_{\delta}x$ is constructed,
	where 
	\begin{equation}\label{fdelta}
		F_{\delta}=-(B\T P_{\delta}B+I)^{-1}B\T P_{\delta}A, 
	\end{equation}
	with $P_{\delta}$ being the unique positive definite solution of the
	$H_{2}$ algebraic Riccati equation,
	\begin{equation}\label{hoded-ARE-sept-17}
		P_\delta=A\T P_\delta A + \delta I - A\T P_\delta B(B\T P_\delta B+I)^{-1}B\T P_\delta A. 
	\end{equation}
	Then, $A+\lambda BF_{\delta}$ is Schur stable for any $\lambda\in \C$
	satisfying,
	\begin{equation}\label{hoded-stablefield}
		\lambda\in \Omega_{\delta}:=\left \{ z\in \mathbb{C}: \left |
		z-\left (1+\tfrac{1}{\gamma_{\delta}}\right ) \right
		|<\tfrac{\sqrt{1+\gamma_{\delta}}}{\gamma_{\delta}}\right \}, 
	\end{equation}
	where $\gamma_{\delta}=\lambda_{\max}(B\T P_{\delta}B)$.  As
	$\delta\rightarrow 0$, $\Omega_{\delta}$ approaches the set
	\[
	H_1:=\{z\in \C :\re z >\tfrac{1}{2}\}
	\]
	in the sense that any compact subset of $H_1$ is contained in
	$\Omega_{\delta}$ for a $\delta$ small enough.
\end{lemma}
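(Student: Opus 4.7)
The plan is to split the lemma into two claims and handle each in turn: first, the Schur stability of $A+\lambda BF_\delta$ for $\lambda\in\Omega_\delta$ via a Lyapunov argument with $P_\delta$ as storage function; second, the geometric limit $\Omega_\delta\to H_1$ via the standard low-gain fact that $P_\delta\to 0$ as $\delta\to 0$.

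For the Lyapunov step, take $V(x)=x^{*}P_\delta x$ (extended to complex $x$) and denote $M=B\T P_\delta B$ and $H=B\T P_\delta A$, so that $F_\delta=-(M+I)^{-1}H$. The DARE \eqref{hoded-ARE-sept-17} rearranges to $A\T P_\delta A-P_\delta=-\delta I+H\T(M+I)^{-1}H$. Expanding the closed-loop Gram matrix and substituting this identity, together with the algebraic simplification $(1-2\re\lambda)(M+I)+|\lambda|^{2}M=(1-2\re\lambda)I+|1-\lambda|^{2}M$, yields
\begin{equation*}
(A+\lambda BF_\delta)^{*}P_\delta(A+\lambda BF_\delta)-P_\delta=-\delta I+H\T(M+I)^{-1}\bigl[(1-2\re\lambda)I+|1-\lambda|^{2}M\bigr](M+I)^{-1}H.
\end{equation*}
Since $M$ is positive semidefinite with eigenvalues in $[0,\gamma_\delta]$ and $|1-\lambda|^{2}\ge 0$, the bracketed matrix is negative semidefinite iff the worst-case scalar inequality $1-2\re\lambda+|1-\lambda|^{2}\gamma_\delta\le 0$ holds. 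Writing $\lambda=a+ib$ and completing the square converts this into $(a-(1+1/\gamma_\delta))^{2}+b^{2}\le (1+\gamma_\delta)/\gamma_\delta^{2}$, which is exactly the closure of the disc defining $\Omega_\delta$. For $\lambda$ strictly inside $\Omega_\delta$, the bracketed matrix is strictly negative semidefinite, so the one-step difference of $V$ is bounded above by $-\delta I$, hence by $-\delta\|x\|^{2}<0$ for nonzero $x$, and Schur stability of $A+\lambda BF_\delta$ follows from the strict Lyapunov inequality.

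For the limit claim, I would invoke the standard low-gain result (see e.g.\ \cite{saberi-stoorvogel-sannuti-exter}) that, under stabilizability of $(A,B)$ and spectrum of $A$ in the closed unit disc, the DARE solution satisfies $P_\delta\to 0$ as $\delta\to 0$; consequently $\gamma_\delta=\lambda_{\max}(B\T P_\delta B)\to 0$. For any compact $K\subset H_1$, the condition $z=a+ib\in\Omega_\delta$ is, after clearing the quadratic in \eqref{hoded-stablefield}, equivalent to $\gamma_\delta[(a-1)^{2}+b^{2}]<2a-1$. Since $K$ is compact and $a>\tfrac{1}{2}$ on $K$, the constant $c_{K}:=\min_{z\in K}(2a-1)/[(a-1)^{2}+b^{2}]$ is strictly positive, and choosing $\delta$ small enough that $\gamma_\delta<c_{K}$ gives $K\subset\Omega_\delta$, which is the stated mode of convergence.

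The main obstacle is the careful algebraic passage from the closed-loop Gram matrix to the scalar circle inequality: the mixed terms $\lambda\,A\T P_\delta BF_\delta$ and $\bar\lambda\,F_\delta\T B\T P_\delta A$ must combine through symmetry into $-(\lambda+\bar\lambda)H\T(M+I)^{-1}H$, and the coefficient of $M$ must condense to $|1-\lambda|^{2}=1-2\re\lambda+|\lambda|^{2}$. Once this reduction is done cleanly, the remaining steps (completing the square and appealing to the well-known limiting behavior of the low-gain Riccati equation) are routine.
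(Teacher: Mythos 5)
Your proof is correct. Note that the paper itself does not prove this lemma at all --- it is ``recalled'' from \cite{lee-kim-shim, consensus-identical-delay-c} with no argument given --- so there is nothing in the paper to compare against step by step; what you have written is essentially the standard proof that underlies those references. Your key identity
$(A+\lambda BF_\delta)^{*}P_\delta(A+\lambda BF_\delta)-P_\delta=-\delta I+H\T(M+I)^{-1}\bigl[(1-2\re\lambda)I+|1-\lambda|^{2}M\bigr](M+I)^{-1}H$
checks out (the mixed terms do combine into $-2\re\lambda\,H\T(M+I)^{-1}H$, and $(1-2\re\lambda)(M+I)+|\lambda|^{2}M$ does condense to $(1-2\re\lambda)I+|1-\lambda|^{2}M$), the worst-case eigenvalue reduction to $1-2\re\lambda+|1-\lambda|^{2}\gamma_\delta\le 0$ is right since the coefficient of $M$ is nonnegative, and completing the square reproduces the disc $\Omega_\delta$ exactly; the resulting bound $\le-\delta\|x\|^{2}$ with $P_\delta>0$ gives Schur stability. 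Two cosmetic points: ``strictly negative semidefinite'' should just read ``negative semidefinite'' (that is all you need, since the $-\delta I$ term supplies the strictness), and in the limit argument your constant $c_K=\min_{z\in K}(2a-1)/[(a-1)^{2}+b^{2}]$ is ill-defined if $1\in K$; it is cleaner to bound $2a-1\geq\eps_0>0$ and $(a-1)^{2}+b^{2}\leq R^{2}$ separately on the compact set $K$ and take $\gamma_\delta<\eps_0/R^{2}$. Neither affects the validity of the argument.
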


\bibliographystyle{plain}
\bibliography{referenc}
\end{document}